

\documentclass[preprint,12pt]{elsarticle}
\usepackage[latin9]{inputenc}
\usepackage{amsmath}
\usepackage{amsthm}
\usepackage{amssymb}
\usepackage{esint}
\usepackage{setspace}
\usepackage{esint}
\usepackage{textcomp}
\usepackage{mathtools}

\makeatletter




\usepackage{amssymb}

\newcommand{\E}{{\mathbb E}}
\usepackage{soul}

\def\I{I\!\!I}
\newcommand{\N}{\mathbb{N}}
\newcommand{\eps}{\varepsilon}

\makeatletter


\theoremstyle{plain}
\newtheorem{theorem}{\protect\theoremname}
\newtheorem{lem}{\protect\lemmaname}

\renewcommand{\i}{\textrm{i}}
\providecommand{\theoremname}{Theorem}
\providecommand{\remname}{Remark}
\providecommand{\lemmaname}{Lemma}
\providecommand{\example}{Example}
\providecommand{\prop}{Proposition}
\providecommand{\corr}{Corollary}
\providecommand{\defname}{Definition}

\newcommand{\A}{\mathcal{A}}
\renewcommand{\P}{\mathbb{P}}

\newcommand{\R}{{\mathbb R}}

\newcommand{\Cov}{\operatorname{cov}}
\newcommand{\Var}{\operatorname{Var}}

\renewcommand{\L}{\mathcal{L}}
\newcommand{\F}{\mathcal{F}}
\newcommand{\K}{\mathcal{K}}

\renewcommand{\Re}{\operatorname{Re}}
\renewcommand{\Im}{\operatorname{Im}}

\global\long\def\argmin{\mathrm{argmin}}
\journal{Statistics \& Probability Letters}

\begin{document}

\begin{frontmatter}



\title{Statistical inference for moving-average L{\'e}vy-driven processes: Fourier-based approach}


\author{Denis Belomestny\( ^{1,2} \), Tatiana Orlova\(^{1}\), and Vladimir Panov\( ^{1} \)}

\address{\( ^{1} \) Laboratory of Stochastic Analysis and its Applications \\
             National Research University Higher School of Economics\\
             Shabolovka, 26, 119049 Moscow,   Russia\\         
             and\\   
    \( ^{2} \)   University of Duisburg-Essen\\
Thea-Leymann-Str. 9, 45127 Essen,  Germany      }
\begin{abstract}
We consider a new method of  the semiparametric statistical estimation for the continuous-time moving average L{\'e}vy processes. We derive the convergence rates of the proposed estimators, and show that these rates are optimal in the minimax sense.
\end{abstract}

\begin{keyword}
moving average \sep L{\'e}vy processes \sep low-frequency estimation \sep Fourier methods



\end{keyword}

\end{frontmatter}

\section{Introduction}
Generally speaking, continuous-time L\'{e}vy-driven moving average  processes are defined as 
\begin{eqnarray}\label{Zt}
 Z_t=\int_{-\infty}^\infty \mathcal{K}(t-s)\, dL_s
\end{eqnarray} 
where  $\K$  is a deterministic kernel and  $L=(L_t)_{t\in \mathbb{R}}$ is a two-sided L\'{e}vy process with L{\'e}vy triplet \((\gamma, \sigma, \nu)\). The conditions which guarantee that this integral is well-defined are given in the pioneering work by Rajput and Rosinski \cite{rajput1989spectral}. For instance, if \(\int x^2 \nu(dx) <\infty\), it is sufficient to assume that \(\K \in \L^1(\R) \cap \L^2 (\R)\). Some popular choices for the kernel  are $\K(t)=t^\alpha e^{-\lambda t}1_{[0,\infty)}(t)$ with $\lambda>0$ and $\alpha>-1/2$, (Gamma-kernels, see e.g. Barndorff-Nielsen and Schmiegel  \cite{Barndorff-NielsenSchmiegel2009}), or 
$\K(t)= e^{-\lambda |t|}$  (well-balanced Ornstein-Uhlenbeck process, see Schnurr and Woerner \cite{WD}).
\par
Recently,  Belomestny, Panov and Woerner \cite{BPW} consider the  statistical estimation of the L{\'e}vy measure \(\nu\) from the low-frequency observations of the process \((Z_t)\). The approach presented in  \cite{BPW} is rather general - in particular, it  works well under various choices of \(\K\). Nevertheless, this approach is based on the superposition of the Mellin and Fourier transforms of the L{\'e}vy measure, and therefore its practical implementation can meet some computational difficulties. 
\par
In this paper, we present  another method, which  essentially uses the fact that  in some cases there exists a direct relation between the characteristic exponent of the process \(L\) and the characteristic function of the process \(Z.\) Therefore, the characteristic exponent can be estimated from the observations of the process \(Z,\) and further application of the Fourier techniques from Belomestny and Reiss  \cite{BR} and Panov \cite{panovdiss} leads to the construction of a consistent estimator of the L{\'e}vy triplet. 

The paper is organised as follows. In the next session, we provide the specifications of our model. In Section~\ref{mainidea}, we present the key mathematical idea, 
which lies in the core of the estimation procedure presented in Section~\ref{estproc}. The upper and lower error bounds for the proposed estimates are given in Section~\ref{errorbounds}. Joint consideration of the corresponding results, Theorems~\ref{thm1} and \ref{thm2}, yields the optimality of the estimates. 
Finally, in Section~\ref{num}, we illustrate our approach with some numerical examples. All proofs are collected in Section~\ref{proofs}.


\section{Set-up}
\label{setup}
In this work, we consider the integrals of the form \eqref{Zt}, where \(\K\) is a symmetric kernel of the form:
\begin{eqnarray}
\label{Kalpha}
\K_{\alpha}(x):=\left(1-\alpha |x| \right)^{\frac{1}{\alpha}},\quad |x|\leq\alpha^{-1}
\end{eqnarray}
for some $\alpha\in(0,1).$ As a limiting case for $\alpha\searrow0,$
we get the exponential kernel $\K_{0}(x)=\exp(-x).$ 
Here, for simplicity, we restrict our attention to a particular class of two-sided L{\'e}vy processes with jumps represented by a compound Poisson process \(CPP_t\),
\begin{eqnarray}
\label{model}
	L_{t} = \gamma t + \sigma W_t + CPP^{(1)}_{t} \cdot \I\left\{
	t \geq 0
\right\}
+
CPP^{(2)}_{t} \cdot \I\left\{
	t < 0
\right\},
\end{eqnarray}
\begin{eqnarray}\label{model2}
CPP^{(k)}_{t} := \sum_{j=1}^{N^{(k)}_{t}} Y^{(k)}_{j}, \qquad k=1,2,
\end{eqnarray}
where  \(
 \gamma \in \R
\) is a drift, \(
 \sigma \geq 0
\), \(
 W_t
\) is a Brownian motion, 
\(N_{t}^{(1)}, N_{t}^{(2)},\) are 2 Poisson processes with intensity \(\lambda\),  \(Y_{1}^{(1)}, Y_{2}^{(1)}, ...\) and \(Y_{1}^{(2)}, Y_{2}^{(2)}, ...\) are i.i.d. r.v's with absolutely continuous distribution, and  all \(Y\)'s, \(N^{(1)}_{t},\) \(N^{(2)}_{t}\), \(W_t
\) are jointly independent.   
Due to the L\'evy-Khintchine formula, the characteristic exponent of  $L$  is given by
\begin{eqnarray}
\nonumber
\psi(u)=\log\E\left[
  e^{
\i u L_{1}	
}
\right]
&=&
\i \gamma u -\frac{1}{2}\sigma^{2} u^{2}+\int_\R \left(
  e^{\i u x}-1
 \right) \nu(dx)
\label{LK1}
\\
&=&\i \gamma u -\frac{1}{2}\sigma^{2} u^{2}-\lambda+\mathcal{F}[\nu](u),\label{LK1}
\end{eqnarray}
where \(\nu\) is a L{\'e}vy measure of \((L_t)\), and 
$\mathcal{F}[\nu](u)=\int_{\R}e^{iux}\nu(dx)$
stands for the Fourier transform of $\nu.$

It is important to note that the process  $\left(Z_{t}\right)_{t\in\mathbb{R}}$ is strictly stationary with the characteristic function of the form
\begin{eqnarray}\label{Phi}
\Phi(u)  :=  \E\left[e^{\i uZ_{t}}\right] =\exp\left(\Psi(u)\right),
\quad \mbox{where} \quad
\Psi(u) := \int_{\R}\psi(u\,\mathcal{K}(s))\, ds,
\end{eqnarray}
and therefore for any time points \(t_1, ..., t_n,\) the r.v.'s \(Z_{t_1}, ..., Z_{t_n}\) are identically distributed (but dependent). Our aim  is to estimate the L{\'e}vy triplet \((\gamma, \sigma, \nu)\) based on  the equidistant observations of the process \(Z_t\) at the time points \(\Delta, 2\Delta, .., n \Delta,\) where \(\Delta>0\) is fixed (low-frequency set-up).
 \section{Main idea}
 \label{mainidea}
 The key observation is that under our choice of the kernel function \(\K,\) we can represent the characteristic exponent \(\psi(\cdot)\) of the process \((L_t)\) via the characteristic function \(\Phi(\cdot)\) of the process \(Z_t\). More precisely, since 
\begin{eqnarray*}
\K'_{\alpha}(x)=-\left(1-\alpha x\right)^{\frac{1-\alpha}{\alpha}}=-\K_{\alpha}^{1-\alpha}(x),
\end{eqnarray*} we have
\begin{eqnarray}
\Phi(u) & = & \exp\left[2\int_{0}^{1/\alpha}\psi(u\K_{\alpha}(x))\,dx\right]
\nonumber\\
 & = & \exp\left[2\int_{0}^{1}\psi(uy)y{}^{\alpha-1}\,dy\right]
 =
  \exp\left[2\,u^{-\alpha}\int_{0}^{u}\psi(z)z{}^{\alpha-1}\,dz\right]. \label{Phiu}
\end{eqnarray}
Therefore,  we derive
\begin{eqnarray}\label{psiu}
\psi(u)=\frac{1}{2}u^{1-\alpha}\left(u^{\alpha}\log(\Phi(u))\right)'=\frac{1}{2}\left(\alpha\log(\Phi(u))+u\frac{\Phi'(u)}{\Phi(u)}\right),
\end{eqnarray}
since $\psi(u)u^{\alpha-1}\to0$ as $u\to+0$ provided that  \(\int |x| \nu(dx) < \infty\), see Lemma~\ref{lem1}. Therefore, the characteristic exponent $\psi$ can be directly estimated from
data via a plug-in estimator based on the empirical characteristic function of \(Z\).

Moreover, returning to the representation \eqref{LK1}, we conclude that the L{\'e}vy triplet  \((\gamma, \sigma, \nu)\) can be estimated from \(\psi.\)
In fact,  since $\nu$ is absolutely
continuous with an absolutely integrable density, then by the Riemann-Lebesgue
lemma (see \cite{Kawata}, p. 43) $\mathcal{F}[\nu](u)\to0$
as $|u|\to\infty,$ and consequently $\psi(u)$ can be viewed, at least
for large $|u|,$ as a second order polynomial with the coefficients 
$(-\lambda,i\gamma,-\sigma^{2}/2).$  This observation gives rise for the estimation procedure, which we present in the next session.

\section{Estimation procedure}
\label{estproc}
Assume that the process $(Z_{t})$ is observed on the equidistant time
grid $t\in\left\{ \Delta,2\Delta,...,n\Delta\right\} ,$ where $\Delta$
is fixed.
\newline

\textbf{Step 1: estimation of \(\psi\).} Define 
\[
\Phi_{n}(u)=\frac{1}{n}\sum_{j=1}^{n}e^{iuZ_{j\Delta}},
\]
and set 
\[
\psi_{n}(u)=\frac{1}{2}\left(\alpha \log(\Phi_{n}(u))+u\frac{\Phi_{n}'(u)}{\Phi_{n}(u)}\right),
\]
where the branch of the complex logarithm is taken in such a way that
$\psi_{n}$ is continuous on $(-x_{0,n},x_{0,n})$ with $\psi_{n}(0)=0$
and $x_{0,n}$ being the first zero of $\Phi_{n}.$ In fact, since
$\Phi$ does not vanish on $\mathbb{R}$, we have $x_{0,n}\stackrel{a.s.}{\to}\infty.$
\newline

\textbf{Step 2: estimation of \(\sigma\) and \(\lambda\).}  Let $U_{n}\to\infty$ and 
\[
\widetilde{w}^{U_{n}}(u):=(1/U_{n})\;\widetilde{w}\left(u/U_{n}\right),
\]
where $\widetilde{w}(u)$ is a continuous function, supported on the
interval $[\eps,1]$ with some \(\eps>0.\) 
Consider now the optimisation problem 
\begin{equation}
(\sigma_{n}^{2},\lambda_{n}):=\argmin_{(\sigma^{2},\lambda)}\int_{0}^{\infty}\widetilde{w}^{U_{n}}(u)(\Re[\psi_{n}(u)]+\sigma^{2}u^{2}/2+\lambda)^{2}\,du,\label{eq:weight_ls}
\end{equation}
which has the solution 
\begin{eqnarray}
\sigma_{n}^{2}=\int_{0}^{\infty}w_{\sigma}^{U_{n}}(u)\Re\psi_{n}(u)\,du,\label{sigma_est_opt}
\end{eqnarray}
with
\begin{eqnarray}
w_{\sigma}^{U_{n}}(u) & := & \widetilde{w}^{U_{n}}(u)\frac{2\left[\left(\int_{0}^{\infty}\widetilde{w}^{U_{n}}(s)\:ds\right)u^{2}-\int_{0}^{\infty}\widetilde{w}^{U_{n}}(s)s^{2}\:ds\right]}{\left(\int_{0}^{\infty}\widetilde{w}^{U_{n}}(s)s^{2}\:ds\right)^{2}-\int_{0}^{\infty}\widetilde{w}^{U_{n}}(s)s^{4}\:ds\;\cdot\;\int_{0}^{\infty}\widetilde{w}^{U_{n}}(s)\:ds}.
\nonumber
\\ \label{eq: wsigma}
\end{eqnarray}
Note that the weighting function $w_{\sigma}^{U_{n}}(u)$ satisfies the property $w_{\sigma}^{U_{n}}(u)=U_{n}^{-3}w_{\sigma}^{1}(u/U_{n})$, and moreover, 
\begin{equation}
\int_{0}^{U_{n}}(-u^{2}/2)w_{\sigma}^{U_{n}}(u)\,du=1,\quad\int_{0}^{U_{n}}w_{\sigma}^{U_{n}}(u)\,du=0.\label{wsigma_prop}
\end{equation}
Analogously, 
\begin{eqnarray}
\lambda_{n}=\int_{0}^{\infty}w_{\lambda}^{U_{n}}(u)\Re\psi_{n}(u)\,du\label{lambda_est_opt}
\end{eqnarray}
 holds with $w_{\lambda}^{U_{n}}(u)=U_{n}^{-1}w_{\lambda}^{1}(u/U_{n})$
satisfying the properties 
\[
\int_{0}^{U_{n}}(-1)w_{\lambda}^{U_{n}}(u)\,du=1,\quad\int_{0}^{U_{n}}(-u^{2}/2)w_{\lambda}^{U_{n}}(u)\,du=0.
\]
\newline

\textbf{Step 3: estimation of \(\gamma\).} Finally,  the  parameter \(\gamma\) can be estimated by considering the optimisation problem 
\begin{eqnarray}
\gamma_{n}:=\argmin_{\gamma}\int_{0}^{\infty}\widetilde{w}^{U_{n}}(u)(\Im\psi_{n}(u)-\gamma u)^{2}\,du,\label{weight_ls1}
\end{eqnarray}
which leads to the estimate 
\begin{eqnarray}
\gamma_{n}=\int_{0}^{\infty}w_{\gamma}^{U_{n}}(u)\Im\psi_{n}(u)\,du,\label{gamma_est_opt}
\end{eqnarray}
where $w_{\gamma}^{U_{n}}(u)=U_{n}^{-2}w_{\gamma}^{1}(u/U_{n})$ fulfills
\(
\int_{0}^{U_{n}}u\,w_{\gamma}^{U_{n}}(u)\,du=1.
\)
All functions $w_{\sigma}^{1}$, $w_{\gamma}^{1}$ and $w_{\lambda}^{1}$
are supported on $[\eps,1]$ and bounded. 
\newline

\textbf{Step 4: estimation of the L{\'e}vy density.}
Note that under our assumptions on the L{\'e}vy process \((L_t)\) (see Section~\ref{setup}),  the Levy measure
$\nu$ possesses a density, which we denote, with a slight abuse of notation, also by $\nu(x)$. This L{\'e}vy density can be estimated as 
a regularised inverse Fourier transform of the remainder: 
\begin{equation}
\nu_{n}(x):={\cal F}^{-1}\left[\Bigl(\psi_{n}(\cdot)+\tfrac{\sigma_{n}^{2}}{2}(\cdot)^{2}-i\gamma_{n}(\cdot)+\lambda_{n}\Bigr)w_{\nu}(\cdot/U_{n})\right](x),\quad x\in\R,\label{nuhatdef}
\end{equation}
where $w_{\nu}$ is a weight function supported on $[-1,1].$ Note
that $\int_{\R}\nu_{n}(x)\,dx=\lambda_{n},$ if $w_{\nu}(0)=1.$

\section{Error bounds}
\label{errorbounds}
\begin{theorem}
\label{thm1}
Consider the model~\eqref{Zt}, where \(\K\) is a kernel in the form \eqref{Kalpha} and  \((L_t)\) is a L{\'e}vy process in the form~\eqref{model} with triplet \(\left(
  \gamma, \sigma, \nu
\right)\). Assume that the L{\'e}vy density \(\nu\) is \(s\)-times weakly differentiable  for some \(s \in \N\), and moreover the L{\'e}vy triplet belongs to the class
\begin{eqnarray*}
\mathcal{T}_s = \mathcal{T}_s (\sigma^\circ, R)=\Biggl\{
\sigma \in (0, \sigma^\circ), \;\;
\int x^2 \nu(dx) \leq R, \;\;
\left\Vert \nu^{(s)}\right\Vert _{\infty}\leq R
\Biggr\}
\end{eqnarray*}
with some \(\sigma^\circ, R >0.\)
 Assume also that the weighting functions satisfy the conditions 
\begin{eqnarray}
\label{Fw}
\|\mathcal{F}(w_{\sigma}^{1}(u)/u^{s})\|_{L^{1}} < \infty,
\quad
\|\mathcal{F}(w_{\lambda}^{1}(u)/u^{s})\|_{L^{1}}<\infty,
\end{eqnarray}
\begin{eqnarray}
\|\mathcal{F}(w_{\gamma}^{1}(u)/u^{s})\|_{L^{1}} <\infty.
\end{eqnarray}
Then 
 it holds 
\begin{eqnarray*}
\lim_{A \to +\infty}
\varlimsup_{n \to +\infty}
\sup_{\left(
  \gamma, \sigma, \nu
\right) \in \mathcal{T}_s
}
\P \left\{ 
  \left|
   \sigma_n^2 - \sigma^2
\right|
\geq A \cdot U_n^{-(s+3)}
\right\} 
&=&0,\\
\lim_{A \to +\infty}
\varlimsup_{n \to +\infty}
\sup_{\left(
  \gamma, \sigma, \nu
\right) \in \mathcal{T}_s
}
\P \left\{ 
  \left|
   \gamma_n - \gamma
\right|
\geq A \cdot U_n^{-(s+2)}
\right\} 
&=&0,\\
\lim_{A \to +\infty}
\varlimsup_{n \to +\infty}
\sup_{\left(
  \gamma, \sigma, \nu
\right) \in \mathcal{T}_s
}
\P \left\{ 
  \left|
   \lambda_n - \lambda
\right|
\geq A \cdot U_n^{-(s+1)}
\right\} 
&=&0,
\end{eqnarray*}
provided  \(U_n=\sqrt{\kappa \log(n)}\) with some constant \(\kappa>0\) depending on \(\sigma^\circ\) and \(R.\)

\end{theorem}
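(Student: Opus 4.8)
The plan is to decompose each estimation error into a deterministic \emph{bias} and a \emph{stochastic} fluctuation, treating $\sigma_n^2,\gamma_n,\lambda_n$ in parallel since all three have the form $\int_0^\infty w_\bullet^{U_n}(u)\,\Re\psi_n(u)\,du$ (with $\Im$ for $\gamma_n$). First I would insert the exact identities $\Re\psi(u)=-\tfrac12\sigma^2u^2-\lambda+\Re\F[\nu](u)$ and $\Im\psi(u)=\gamma u+\Im\F[\nu](u)$ coming from \eqref{LK1}. The moment conditions \eqref{wsigma_prop} and their $w_\lambda,w_\gamma$ analogues are designed precisely so that the polynomial parts are reproduced exactly; for instance
\begin{equation*}
\int_0^\infty w_\sigma^{U_n}(u)\,\Re\psi(u)\,du=\sigma^2+\int_0^\infty w_\sigma^{U_n}(u)\,\Re\F[\nu](u)\,du .
\end{equation*}
Hence the bias of every estimator is governed solely by the smooth remainder $\F[\nu]$ integrated against the corresponding weight.

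Next I would bound this bias using the smoothness of $\nu$. Writing $\F[\nu](u)=(-\i u)^{-s}\F[\nu^{(s)}](u)$, applying Parseval, and using the scaling $w_\sigma^{U_n}(u)/u^s=U_n^{-(s+3)}\,(w_\sigma^1/(\cdot)^s)(u/U_n)$, the integral transfers to the space side as a pairing of $\nu^{(s)}$ against the transform of $w_\sigma^{U_n}(\cdot)/(\cdot)^s$, giving
\begin{equation*}
\Bigl|\int_0^\infty w_\sigma^{U_n}(u)\,\F[\nu](u)\,du\Bigr|\;\lesssim\;U_n^{-(s+3)}\,\|\nu^{(s)}\|_\infty\,\bigl\|\F(w_\sigma^1(\cdot)/(\cdot)^s)\bigr\|_{L^1}.
\end{equation*}
Conditions \eqref{Fw} make the $L^1$-norm finite and $\|\nu^{(s)}\|_\infty\le R$ on $\mathcal{T}_s$, so the $\sigma$-bias is $O(U_n^{-(s+3)})$ uniformly; the weaker scalings $w_\gamma^{U_n}=U_n^{-2}w_\gamma^1(\cdot/U_n)$ and $w_\lambda^{U_n}=U_n^{-1}w_\lambda^1(\cdot/U_n)$ lose one and two powers of $U_n$, producing the $O(U_n^{-(s+2)})$ and $O(U_n^{-(s+1)})$ rates and matching the exponents in the statement.

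The main work, and the principal obstacle, is the stochastic term $\int_0^\infty w_\bullet^{U_n}(u)\,\Re(\psi_n-\psi)(u)\,du$. The key structural observation is that the compact support of $\K_\alpha$ on $[-\alpha^{-1},\alpha^{-1}]$, see \eqref{Kalpha}, together with the independence of the increments of $L$, makes $(Z_{j\Delta})_j$ an $m$-dependent stationary sequence with $m=\lceil 2/(\alpha\Delta)\rceil$: observations whose kernel windows do not overlap are independent. This collapses the covariance sums for $\Phi_n$ and $\Phi_n'$ to $O(mn)$ nonzero terms and, via a Bernstein/Fuk--Nagaev inequality for $m$-dependent sums (using $\int x^2\,\nu(dx)\le R$ for the moments of $Z$ entering $\Phi_n'$) combined with a grid-plus-Lipschitz argument in $u$, should yield uniform deviation bounds of order $\sqrt{\log n/n}$ for $\Phi_n-\Phi$ and $\Phi_n'-\Phi'$ on $[\eps U_n,U_n]$. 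On the event that $\Phi_n$ stays close to $\Phi$ (so that $x_{0,n}>U_n$), I would linearise
\begin{equation*}
\psi_n-\psi=\tfrac12\Bigl(\alpha(\log\Phi_n-\log\Phi)+u\bigl(\tfrac{\Phi_n'}{\Phi_n}-\tfrac{\Phi'}{\Phi}\bigr)\Bigr),
\end{equation*}
reducing the fluctuation to terms of the form $|\Phi_n-\Phi|/|\Phi|$ and $u\,|\Phi_n'-\Phi'|/|\Phi|$.

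The difficulty is the division by $|\Phi(u)|$, which decays like a Gaussian: from \eqref{Phiu} one computes $\Re\Psi(u)\sim-\sigma^2u^2/(\alpha+2)$, so $|\Phi(u)|\asymp\exp(-\sigma^2u^2/(\alpha+2))$. At $u=U_n=\sqrt{\kappa\log n}$ this is of order $n^{-\sigma^2\kappa/(\alpha+2)}$, and the amplified stochastic error behaves like $\sqrt{\log n/n}\,\cdot\,n^{\sigma^2\kappa/(\alpha+2)}$ up to polynomial factors in $U_n$. This is exactly why $U_n$ is taken logarithmic and why $\kappa$ must depend on $\sigma^\circ$ and $R$: choosing $\kappa$ so small that $(\sigma^\circ)^2\kappa/(\alpha+2)<\tfrac12$ forces this quantity to $0$ faster than any negative power of $U_n$, uniformly over $\sigma<\sigma^\circ$ and over $\mathcal{T}_s$. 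Consequently the stochastic term is negligible relative to the bias, and combining the two bounds gives, for each parameter, $\P(|\widehat\theta_n-\theta|\ge A\,U_n^{-r})\to0$ as $n\to\infty$ followed by $A\to\infty$, with $r=s+3,s+2,s+1$ for $\sigma_n^2,\gamma_n,\lambda_n$ respectively. I expect the uniform concentration of $\Phi_n'$ against the exponentially small denominator to be the most delicate point, requiring care in the moment (or truncation) estimates for the unbounded variables $Z_{j\Delta}$.
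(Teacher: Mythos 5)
Your proposal is correct and follows essentially the same route as the paper: the identical bias/statistical-error decomposition via the moment conditions \eqref{wsigma_prop}, the same Plancherel argument with the weight scalings $U_n^{-3},U_n^{-2},U_n^{-1}$ yielding the bias rates $U_n^{-(s+3)},U_n^{-(s+2)},U_n^{-(s+1)}$, the same linearisation of $\psi_n-\psi$ through $(\Phi_n-\Phi)/\Phi$ and $(\Phi_n'-\Phi')/\Phi$ on a high-probability event, the same exploitation of the compact support of $\K_\alpha$ (the paper encodes your $m$-dependence observation through $\int\K_\alpha(x)\K_\alpha(x+h)\,dx=0$ for $|h|>2/\alpha$ in its covariance bounds), and the same Gaussian decay $|\Phi(u)|\asymp e^{-\sigma^2u^2/(\alpha+2)}$ dictating $U_n=\sqrt{\kappa\log n}$ with $\kappa$ small relative to $(\sigma^\circ)^2$. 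The only inessential deviation is that you control the linear fluctuation term through the uniform deviation event (where the paper instead computes $\Var\breve L_n$ explicitly and applies Markov, deferring the uniform concentration of $\Phi_n,\Phi_n'$ to the techniques of \cite{BPW}, exactly as you defer it to a Bernstein-plus-chaining argument for $m$-dependent sums).
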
 

As shown in the next theorem, the above rates are optimal in minimax sense.

\begin{theorem}
\label{thm2}
For any \(\sigma^\circ, R >0, \) there exists some \(A>0\) such that 
\begin{eqnarray*}
\varliminf_{n \to +\infty}
\inf_{\breve\sigma_n}
\sup_{\left(
  \gamma, \sigma, \nu
\right) \in \mathcal{T}_s
}
\P \left\{ 
  \left|
   \breve\sigma_n^2 - \sigma^2
\right|
\geq A \cdot \left(
  \log(n) 
\right)^{-(s+3)/2}
\right\} 
&>&0,\\
\varliminf_{n \to +\infty}
\inf_{\breve\gamma_n}
\sup_{\left(
  \gamma, \sigma, \nu
\right) \in \mathcal{T}_s
}
\P \left\{ 
  \left|
   \breve\gamma_n - \gamma
\right|
\geq A \cdot  \left(
  \log(n) 
\right)^{-(s+2)/2}
\right\} 
&>&0,\\
\varliminf_{n \to +\infty}
\inf_{\breve\lambda_n}
\sup_{\left(
  \gamma, \sigma, \nu
\right) \in \mathcal{T}_s
}
\P \left\{ 
  \left|
   \breve\lambda_n - \lambda
\right|
\geq A \cdot  \left(
  \log(n) 
\right)^{-(s+1)/2}
\right\} 
&>&0.
\end{eqnarray*}
where the infimums are taken over  
all possible estimates \(\breve\sigma_n, \breve\gamma_n, \breve\lambda_n\) of the parameters \(\sigma,\gamma,\lambda,\) and supremums - over all triplets  from the class \(\mathcal{T}_s=\mathcal{T}_s(\sigma^\circ, R).\)
\end{theorem}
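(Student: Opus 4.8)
The plan is to prove all three lower bounds by Le~Cam's two‑point method. Fixing the parameter of interest, say $\sigma^{2}$, it suffices to exhibit for every large $n$ two triplets $\theta_{0},\theta_{1}\in\mathcal T_s$ whose separation is of the asserted order, $|\sigma_{0}^{2}-\sigma_{1}^{2}|\asymp(\log n)^{-(s+3)/2}$, and whose induced laws $\P_{\theta_{0}}^{(n)},\P_{\theta_{1}}^{(n)}$ of the observation vector $(Z_{\Delta},\dots,Z_{n\Delta})$ satisfy $\mathrm{TV}(\P_{\theta_{0}}^{(n)},\P_{\theta_{1}}^{(n)})\le 1/2$ uniformly in $n$. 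Any estimator then misses by at least half the separation on one of the two hypotheses with probability bounded away from zero, which is exactly the claim; the bounds for $\gamma$ and $\lambda$ follow from strictly parallel constructions.

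The construction exploits the fact that, by \eqref{LK1}, the parameters $\lambda,\gamma,\sigma^{2}$ are the coefficients of $1,u,u^{2}$ in the large-$u$ asymptotics of $\psi$ (since $\mathcal F[\nu](u)\to0$), whereas the low-frequency behaviour only fixes the aggregated quantities $\gamma+\int x\,\nu(dx)$ and $\sigma^{2}+\int x^{2}\,\nu(dx)$. The idea is thus to keep $\psi$ — hence $\Phi$, via \eqref{Phi} — unchanged on an observable band while trading the target parameter against the matching moment of $\nu$. For the $\sigma^{2}$-bound I set $\gamma_{1}=\gamma_{0}$, $\lambda_{1}=\lambda_{0}$, $\sigma_{1}^{2}=\sigma_{0}^{2}-\delta$, and define the density perturbation through $\mathcal F[\nu_{1}-\nu_{0}](u)=-\tfrac{\delta}{2}u^{2}\chi(u/M)$, where $\chi$ is a smooth even cut-off equal to $1$ on $[-(1-\eta),1-\eta]$ and vanishing off $[-1,1]$. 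Then $\psi_{0}-\psi_{1}=-\tfrac{\delta}{2}u^{2}\bigl(1-\chi(u/M)\bigr)$ vanishes on $[-(1-\eta)M,(1-\eta)M]$, the trade $\int x^{2}\nu_{1}=\int x^{2}\nu_{0}+\delta$ preserves the second-order coefficient, and mass and first moment of $\nu$ are untouched. The binding constraint is smoothness: since $\|(\nu_{1}-\nu_{0})^{(s)}\|_{\infty}\le\tfrac{1}{2\pi}\int|u|^{s}\,|\mathcal F[\nu_{1}-\nu_{0}](u)|\,du\asymp\delta\,M^{s+3}$, membership in $\mathcal T_s$ forces $\delta\asymp M^{-(s+3)}$. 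The analogous perturbations $\mathcal F[\nu_{1}-\nu_{0}](u)=i\delta u\,\chi(u/M)$ and $=\delta\,\chi(u/M)$ trade $\gamma$ against $\int x\,\nu$ and $\lambda$ against $\int\nu$, and the same bound gives $\delta\asymp M^{-(s+2)}$ and $\delta\asymp M^{-(s+1)}$; positivity $\nu_{1}\ge0$ holds because $\|\nu_{1}-\nu_{0}\|_{\infty}\asymp\delta M^{j}\asymp M^{-s}\to0$. Choosing $M=\sqrt{\kappa\log n}$ converts these into the three stated rates.

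It remains to verify indistinguishability for this $M$ with $\kappa$ small. Because $\K_{\alpha}\in[0,1]$, the representation $\Psi(u)=\int\psi(u\K(s))\,ds$ together with $\psi_{0}\equiv\psi_{1}$ on $[-(1-\eta)M,(1-\eta)M]$ yields $\Phi_{0}\equiv\Phi_{1}$ on the same band; outside it the exponents differ only by $-\tfrac{\delta}{2}u^{2}$, but there the Gaussian factor $|\Phi_{k}(u)|\le\exp(-c\,u^{2})$, with $c=\tfrac12\sigma^{2}\int\K^{2}$, is already of order $n^{-1/2}$, so the whole mismatch sits in a sub-noise régime. Thus at the level of the one-dimensional marginals the two hypotheses are easily seen to be close.

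The main obstacle is that the observations are dependent, so one cannot tensorise a per-sample divergence; nor can one pass to the continuous-path experiment, whose laws are mutually singular as soon as $\sigma_{0}\neq\sigma_{1}$ (the quadratic variation identifies $\sigma^{2}$), which is precisely the feature responsible for the merely logarithmic rate. I would resolve this using the finite range of dependence: since $\K_{\alpha}$ is supported on $[-\alpha^{-1},\alpha^{-1}]$, the sequence $(Z_{k\Delta})_{k}$ is $m$-dependent with $m=\lceil 2/(\alpha\Delta)\rceil$, and its joint characteristic function is explicit, $\E\exp\!\big(i\sum_{k}v_{k}Z_{k\Delta}\big)=\exp\!\big(\int\psi(\sum_{k}v_{k}\K(k\Delta-s))\,ds\big)$. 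Exploiting the resulting banded (block-diagonalisable up to fixed overlap) dependence structure, I would bound $\mathrm{KL}(\P_{\theta_{0}}^{(n)}\,\|\,\P_{\theta_{1}}^{(n)})$ by a sum of $O(n)$ block contributions, each controlled by the tail integral $\int_{|u|>(1-\eta)M}|\Phi_{0}(u)|^{2}\,du\lesssim n^{-1}$ coming from the band-matching and the Gaussian suppression above, so that the total divergence is $O(1)$. Taking $\kappa$ small enough (depending on $\sigma^{\circ}$ and $R$, exactly as in Theorem~\ref{thm1}) makes this at most $\log 2$, and Pinsker's inequality then gives $\mathrm{TV}\le 1/2$, completing the argument. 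The delicate point throughout is this joint-law divergence bound under dependence, where the explicit infinitely-divisible structure of the joint characteristic function and the finite dependence range are essential.
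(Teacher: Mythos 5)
Your two-point construction coincides, essentially symbol for symbol, with the paper's: the paper also perturbs on the Fourier side through a flat-top kernel (\(\nu_1=\nu_0+\delta K_h''\) with \(\mathcal{F}K\equiv 1\) on \([-1,1]\), which is exactly your \(u^2\chi(u/M)\) perturbation with \(M=h^{-1}\)), obtains \(\psi_1-\psi_0\propto\delta u^{2}\bigl(1-\mathcal{F}K(hu)\bigr)\) vanishing on the band \([-h^{-1},h^{-1}]\), enforces membership in \(\mathcal{T}_s\) via \(\delta\asymp h^{s+3}\), and takes \(h\asymp(\log n)^{-1/2}\) so that the Gaussian suppression \(e^{-c/h^{2}}\) of \(|\Phi|\) beyond the band is of order \(n^{-1}\). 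Your rate bookkeeping, and your parallel perturbations for \(\gamma\) and \(\lambda\) giving \(\delta\asymp M^{-(s+2)}\) and \(M^{-(s+1)}\), are the right extensions (the paper writes out only the \(\sigma\) case).

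The genuine gap is in your indistinguishability step. The inequality you lean on there is false as stated: a KL or \(\chi^2\) divergence between two laws is \emph{not} controlled by an \(L^{2}\)-distance of characteristic functions, so \(\int_{|u|>(1-\eta)M}|\Phi_0(u)|^{2}\,du\lesssim n^{-1}\) says nothing about a likelihood ratio without a pointwise lower bound on the reference density. That lower bound is the technical heart of the paper's proof and is absent from yours: the paper uses the compound-Poisson decomposition to show \(p_0(x)\gtrsim(1+|x|)^{-4}\), whence \(\chi^{2}(p_1|p_0)\lesssim\int(1+x^{4})(p_1-p_0)^{2}dx\lesssim\|\Phi_1-\Phi_0\|_{L^{2}}^{2}+\|(\Phi_1-\Phi_0)''\|_{L^{2}}^{2}\) by Plancherel --- note the second-derivative term needed to absorb the \(x^{4}\) weight, which your tail integral also misses. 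In your block scheme the analogous ingredient would be a lower bound on the \(m\)-dimensional joint densities of the overlapping blocks, which is substantially harder and nowhere sketched; so the step you yourself flag as delicate is precisely the one left unproven. (To be fair, your instinct about dependence is sound: the paper itself bounds only the one-dimensional marginal \(\chi^{2}\) by \(1/n\) and then invokes the i.i.d.-style scheme of Belomestny--Reiss and Tsybakov without addressing the dependent joint law; your repair is more honest in intent, but as written it is an assertion, not an argument.) A secondary, fixable gap: positivity \(\nu_1\geq 0\) does not follow from \(\|\nu_1-\nu_0\|_{\infty}\to 0\) alone, since you never specify \(\nu_0\) and any admissible \(\nu_0\) decays at infinity; you need, as the paper does with the polynomial decay of \(K''\) of arbitrary order, a tail comparison of the form \(\delta M^{3}(1+M|x|)^{-q}=o\bigl(\nu_0(x)\bigr)\) uniformly in \(x\), with e.g. \(\nu_0(x)=c(1+|x|)^{-4}\).
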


\section{Numerical example}\label{num}
Consider the integral \eqref{Zt} with the kernel \(\K(x)\) from the class \eqref{Kalpha}, and the L{\'e}vy process \((L_t)\)  defined by \eqref{model}-\eqref{model2}. For simulation study, we take $\gamma=5,$ $\lambda=1$
and $\sigma=0$, and aim to estimate these parameters  under different choices of the parameter $\alpha$, namely $\alpha=0.5,$
$0.8$ and $0.9$.

\textbf{Simulation.}
For $k=1,2$, denote the jump times of $L_{t}^{(k)}$ by
$s_{1}^{(k)},s_{2}^{(k)},....$, corresponding to the jump sizes $Y_{1}^{(k)},Y_{2}^{(k)},...$  Note that 
$$
Z_{t}=
\begin{cases} 
\frac{2\gamma}{1+\alpha}+\underset{k\in K^{(1)}}{\sum}\left(1-\alpha|t-s_{k}^{(1)}|\right)^{1/\alpha}Y_{k}^{\left(1\right)}, &  
\text{if} \;\; t \geq \frac{1}{\alpha}\\
\frac{2\gamma}{1+\alpha}+\underset{k\in K^{(2)}}{\sum}\left(1-\alpha|t-s_{k}^{(1)}|\right)^{1/\alpha}Y_{k}^{\left(1\right)}&\\
\hspace{2.5cm} +\underset{k\in K^{(3)}}{\sum}\left(1-\alpha|t+s_{k}^{(2)}|\right)^{1/\alpha}Y_{k}^{\left(2\right)},
 & \text{if} \;\; t<\frac{1}{\alpha},
\end{cases}
$$
where\begin{eqnarray*}
K^{\left(1\right)} &:=& \left\{ k:t-\frac{1}{\alpha}\leq s_{k}^{(1)}\leq t+\frac{1}{\alpha}\right\}, \\
K^{\left(2\right)} &:=&  \left\{ k:0\leq s_{k}^{(1)}\leq t+\frac{1}{\alpha}\right\}, \\
K^{\left(3\right)} &:=& \left\{ k:0\leq s_{k}^{(2)}\leq\frac{1}{\alpha}-t\right\} .
\end{eqnarray*}

\noindent Typical trajectory of the process $Z_{t}$ is presented
on Figure~\ref{fig1}. 

\begin{figure}
\includegraphics[width=20cm,height=5cm,keepaspectratio]{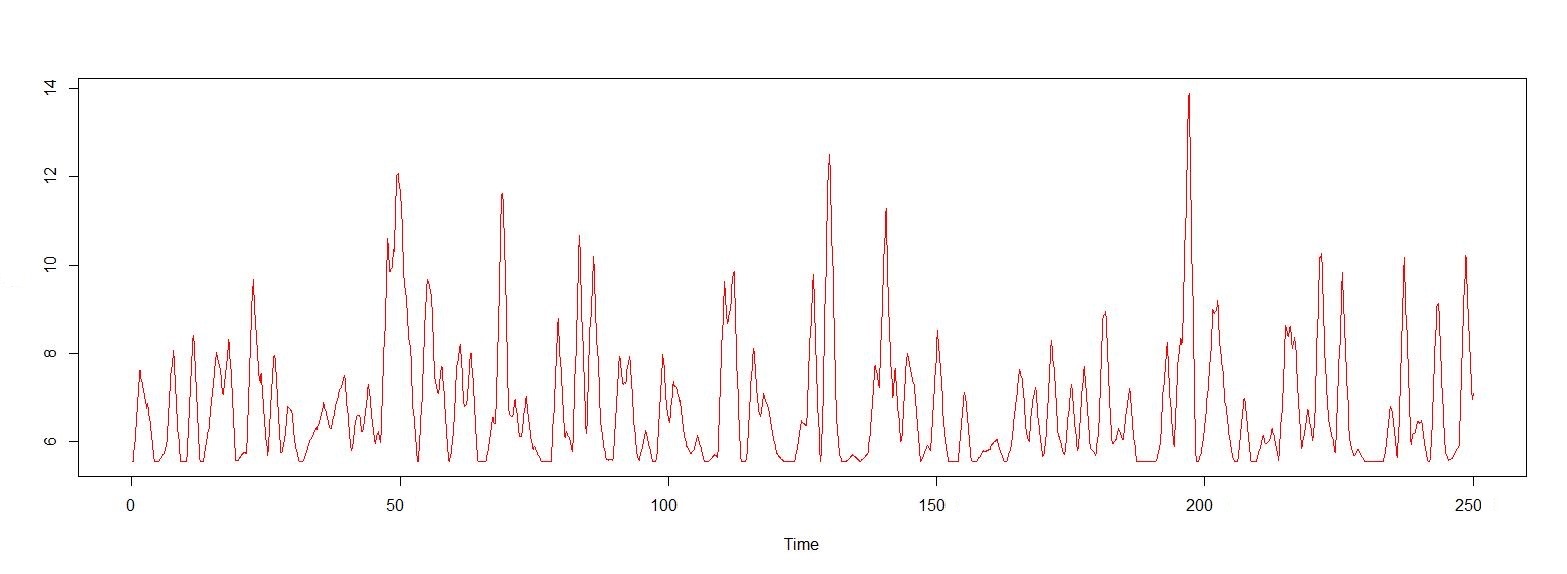}\protect\caption{\label{fig1}Typical trajectory of the process $Z_{t}$ with value of the parameter
$\alpha=0.8$}
\end{figure}

\textbf{Estimation.} Following the ideas from Section~\ref{estproc}, we estimate the parameters 
\(\gamma, \lambda, \sigma\) under different choices of \(\alpha.\)

To show the convergence properties of the considered estimates, we provide simulations
with different values of \(n\). The boxplots of the corresponding estimation errors (differences) based on
25 simulation runs are presented on Figures~\ref{fig2}, \ref{fig3} and \ref{fig4}.
Note that  the parameter $U_{n}$ is chosen by numerical optimisation.  The exact values are presented in Tables~\ref{table1} and \ref{table2}.
\par
The simulation study illustrates our theoretical results on the rates of convergence given in Section~\ref{errorbounds}. In fact, visual comparison of Figures~\ref{fig2}, \ref{fig3} and \ref{fig4} shows that  the  proposed estimator for the parameter \(\sigma\) has the highest speed of convergence to the true value,  whereas the corresponding speed for \(\gamma_n\) is lower, and for \(\lambda_n\) even more low (cf with the rates in Theorem~\ref{thm1}).
Moreover, the simulations results show that the convergence rates significantly
depend on the parameter $\alpha$. More precisely, it turns out that the quality of estimation increases with growing \(\alpha\), and the best rates correspond to the case when $\alpha$ is close to 1. This can explained by the fact that observations become less independent as \(\alpha\) increases.

\noindent 
\begin{figure}
\includegraphics[width=18cm,height=5cm,keepaspectratio]{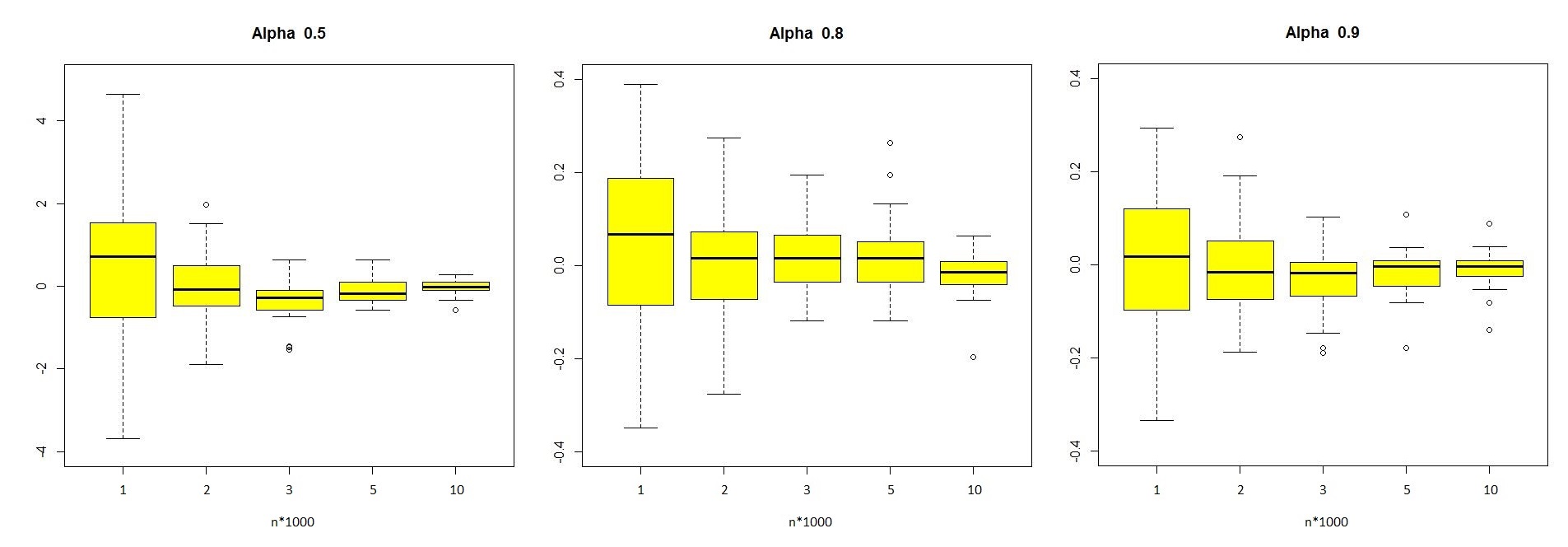}\protect\caption{\label{fig2} Boxplots $\hat{\gamma}-\gamma$ based on 25 simulation runs}

\end{figure}

\begin{figure}

\includegraphics[width=18cm,height=5cm,keepaspectratio]{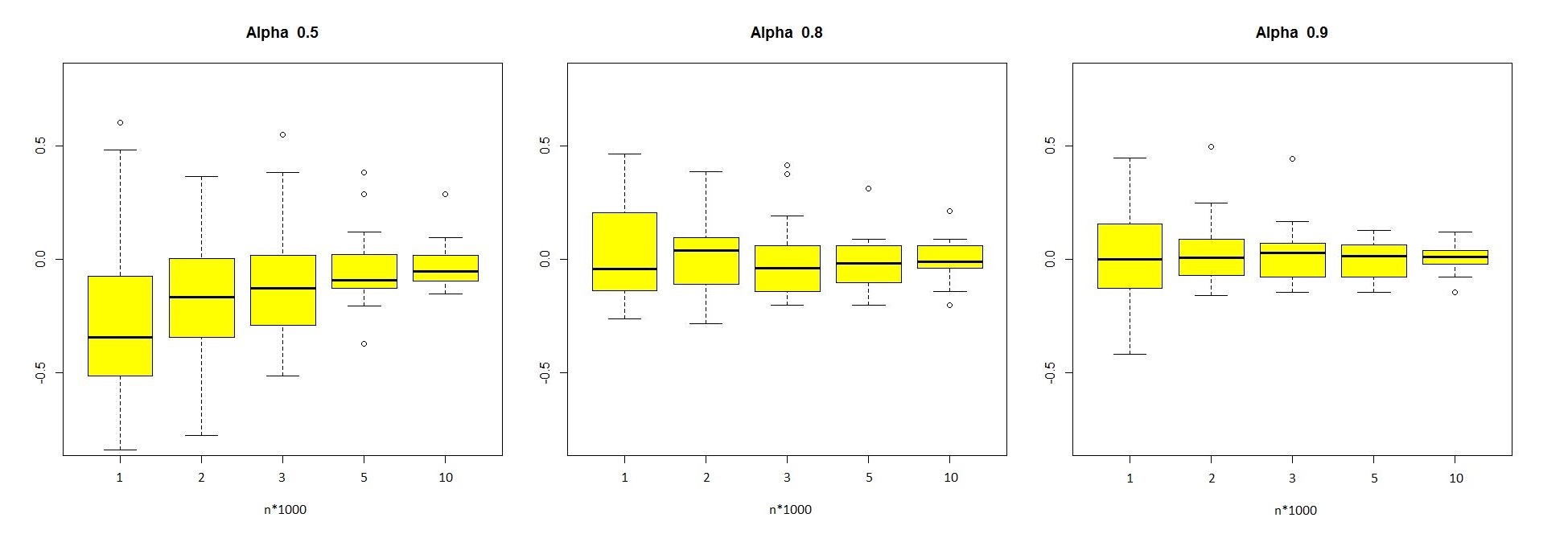}\protect\caption{\label{fig3} Boxplots of $\hat{\lambda}-\lambda$ based on 25 simulation runs}
\end{figure}
\begin{figure}

\includegraphics[width=18cm,height=5cm,keepaspectratio]{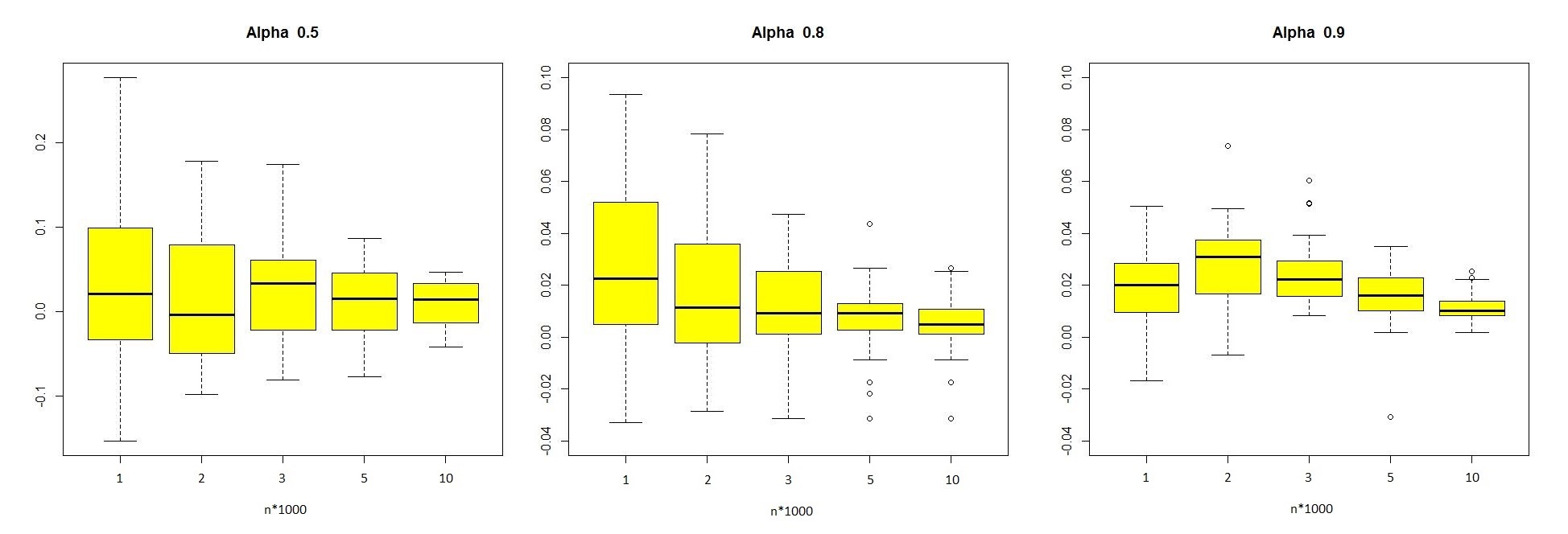}\protect\caption{\label{fig4} Boxplots of $\hat{\sigma^{2}}-\sigma^{2}$ based on 25 simulation
runs}
\end{figure}

\begin{table}
\begin{center}
\begin{tabular}{|c|c|c|}
\hline 
$\alpha=0.5$ & $\alpha=0.8$ & $\alpha=0.9$\tabularnewline
\hline 
\hline 
\begin{tabular}{|c|c|}
\hline 
$n$ & $U_{n}$\tabularnewline
\hline 
\hline 
1000 & 1.2\tabularnewline
\hline 
2000 & 1.35\tabularnewline
\hline 
3000 & 1.4\tabularnewline
\hline 
5000 & 1.45\tabularnewline
\hline 
10000 & 1.55\tabularnewline
\hline 
\end{tabular} & %
\begin{tabular}{|c|c|}
\hline 
$n$ & $U_{n}$\tabularnewline
\hline 
\hline 
1000 & 3.5\tabularnewline
\hline 
2000 & 3.5\tabularnewline
\hline 
3000 & 3.8\tabularnewline
\hline 
5000 & 4\tabularnewline
\hline 
10000 & 4.2\tabularnewline
\hline 
\end{tabular} & %
\begin{tabular}{|c|c|}
\hline 
$n$ & $U_{n}$\tabularnewline
\hline 
\hline 
1000 & 4.5\tabularnewline
\hline 
2000 & 4.5\tabularnewline
\hline 
3000 & 4.6\tabularnewline
\hline 
5000 & 4.8\tabularnewline
\hline 
10000 & 5.1\tabularnewline
\hline 
\end{tabular}\tabularnewline
\hline 
\end{tabular}\protect\caption{\label{table1} Optimal sequences for the estimation of  parameter $\lambda$}
\end{center}
\end{table}
\begin{table}
\begin{center}
\begin{tabular}{|c|c|c|}
\hline 
$\alpha=0.5$ & $\alpha=0.8$ & $\alpha=0.9$\tabularnewline
\hline 
\hline 
\begin{tabular}{|c|c|}
\hline 
$n$ & $U_{n}$\tabularnewline
\hline 
\hline 
1000 & 0.8\tabularnewline
\hline 
2000 & 0.85\tabularnewline
\hline 
3000 & 1\tabularnewline
\hline 
5000 & 1.1\tabularnewline
\hline 
10000 & 1.3\tabularnewline
\hline 
\end{tabular} & %
\begin{tabular}{|c|c|}
\hline 
$n$ & $U_{n}$\tabularnewline
\hline 
\hline 
1000 & 2.7\tabularnewline
\hline 
2000 & 2.75\tabularnewline
\hline 
3000 & 2.75\tabularnewline
\hline 
5000 & 2.8\tabularnewline
\hline 
10000 & 3\tabularnewline
\hline 
\end{tabular} & %
\begin{tabular}{|c|c|}
\hline 
$n$ & $U_{n}$\tabularnewline
\hline 
\hline 
1000 & 3\tabularnewline
\hline 
2000 & 3.2\tabularnewline
\hline 
3000 & 3.2\tabularnewline
\hline 
5000 & 3.3\tabularnewline
\hline 
10000 & 3.5\tabularnewline
\hline 
\end{tabular}\tabularnewline
\hline 
\end{tabular}\protect\caption{\label{table2}  Optimal sequences for the estimation of  parameter $\gamma$}
\end{center}
\end{table}
\begin{table}
\begin{center}
\begin{tabular}{|c|c|c|}
\hline 
$\alpha=0.5$ & $\alpha=0.8$ & $\alpha=0.9$\tabularnewline
\hline 
\hline 
\begin{tabular}{|c|c|}
\hline 
$n$ & $U_{n}$\tabularnewline
\hline 
\hline 
1000 & 8\tabularnewline
\hline 
2000 & 8\tabularnewline
\hline 
3000 & 8\tabularnewline
\hline 
5000 & 8.2\tabularnewline
\hline 
10000 & 8.5\tabularnewline
\hline 
\end{tabular} & %
\begin{tabular}{|c|c|}
\hline 
$n$ & $U_{n}$\tabularnewline
\hline 
\hline 
1000 & 8.5\tabularnewline
\hline 
2000 & 8.55\tabularnewline
\hline 
3000 & 8.6\tabularnewline
\hline 
5000 & 8.7\tabularnewline
\hline 
10000 & 8.8\tabularnewline
\hline 
\end{tabular} & %
\begin{tabular}{|c|c|}
\hline 
$n$ & $U_{n}$\tabularnewline
\hline 
\hline 
1000 & 8.75\tabularnewline
\hline 
2000 & 8.8\tabularnewline
\hline 
3000 & 8.8\tabularnewline
\hline 
5000 & 9\tabularnewline
\hline 
10000 & 9.2\tabularnewline
\hline 
\end{tabular}\tabularnewline
\hline 
\end{tabular}\protect\caption{Optimal sequences for the estimation of  parameter $\sigma$}
\end{center}
\end{table}

\newpage
\section{Proofs}
\label{proofs}
\subsection{Proof of Theorem~\ref{thm1}}

\textbf{1.} 
For the sake of clarity we focus our analysis on the estimate $\sigma_{n}.$
First note that by (\ref{sigma_est_opt}) and (\ref{wsigma_prop})
the difference $\sigma_{n}^{2}-\sigma^{2}$ can be decomposed as follows:

\begin{eqnarray}
\sigma_{n}^{2}-\sigma^{2} & = & \int_{0}^{U_{n}}w_{\sigma}^{U_{n}}(u)\Re\left(\psi_{n}(u)-\psi(u)\right)\,du \nonumber\\
&&\hspace{5.5cm}+\int_{0}^{U_{n}}w_{\sigma}^{U_{n}}(u)\Re\psi(u)\,du-\sigma^{2}\nonumber \\
 & = & \underbrace{\int_{0}^{U_{n}}w_{\sigma}^{U_{n}}(u)\Re\left(\psi_{n}(u)-\psi(u)\right)\,du}_{\text{Statistical error}}+\underbrace{\int_{0}^{U_{n}}w_{\sigma}^{U_{n}}(u)\Re \mathcal{F}[\nu](u)\,du}_{\text{ Bias }}.
 \nonumber\\
 &&\label{eq:error_dec}
\end{eqnarray}

\textbf{2.} Let us first consider the bias term in (\ref{eq:error_dec}). Note that  its order obviously depends on the decay of the Fourier transform
$\mathcal{F}[\nu](u),$ which is related to the smoothness
of $\nu$, see \cite{Kawata}.  Then by the Plancherel identity 
\begin{eqnarray}
\nonumber
\left|
\int_{0}^{U_{n}}w_{\sigma}^{U_{n}}(u)\Re \mathcal{F}[\nu](u)\,du
\right|
& \leq &
\left|\int_{0}^{\infty}w_{\sigma}^{U_{n}}(u)\mathcal{F}[\nu](u)du\right| \\
\nonumber
& = & 2\pi\,\left|\int_{-\infty}^{\infty}\nu^{(s)}(x)\overline{{\cal F}^{-1}[w_{\sigma}^{U_{n}}(\cdot)/(i\cdot)^{s}](x)}dx\right|\\
\nonumber
 & \leq & U_n^{-(s+3)}\|\nu^{(s)}\|_{\infty}\|\mathcal{F}(w_{\sigma}^{1}(u)/u^{s})\|_{L^{1}}\lesssim U_{n}^{-(s+3)},\\
 &&\label{bias}
\end{eqnarray}
since $w_{\sigma}^{U_{n}}(u)=U_{n}^{-3}w_{\sigma}^{1}(u/U_{n})$, \(\left\Vert \nu^{(s)}\right\Vert _{\infty}\le R,\) and \eqref{Fw}.

\textbf{2.} As for the statistical error, we first note that 
\begin{eqnarray*}
\psi_{n}(u)-\psi(u) &=& 
\frac{
\alpha
}{2
}
 \left(
  \log\left(
  \Phi_n(u)
\right) - \log\left(
  \Phi(u)
\right)
\right)
+
\frac{
u
}{2
}
\left(
\frac{\Phi_{n}'(u)}{\Phi_{n}(u)}-\frac{\Phi'(u)}{\Phi(u)}  
\right).
\end{eqnarray*}
Consider the event \begin{eqnarray*}
\mathcal{B}_{n,A}:=\left\{\max_{j=0,1}\sup_{u \in [-U_n, U_n]} \left|D_{n,j}(u)\right|\leq  A \eps_n \right\},
\end{eqnarray*}
where  $D_{n,j}(u)=(\Phi_{n}^{(j)}(u)-\Phi^{(j)}(u))/\Phi(u),$ $j=0,1,$ and \(\eps_n \to 0\) as \(n \to \infty.\) Using the same techniques as in Theorem~2 from \cite{BPW}, one can show that from the condition \(\int_{|x|>1} x^2 \nu(dx) <\infty,\) if follows that 
\begin{eqnarray*}
\P\Bigl\{\mathcal{B}_{n,A}\Bigr\}\geq  
1 - 
\frac{
C_{1}
}{
	\sqrt{A}
}
\frac{
	\sqrt{U_{n}} n^{(1/4) - C_{1}A^{2}}
}
{
	\log^{1/4}(n)
}
,
\end{eqnarray*}
provided \(\eps_n=\sqrt{\log(n)/n} \exp\{C_2 \sigma^2 U_n^2\}\) with some \(C_1, C_2>0\) depending on \(\alpha.\) On the event \(\mathcal{B}_{n,A}\), it holds
\begin{eqnarray*}
\log\left(
  \Phi_n(u)
\right) - \log\left(
  \Phi(u)
\right)
= D_{n,0}(u) + O\left(|D_{n,0}(u)|^2\right),
\end{eqnarray*}
because \(
\left| \log(1+z) -z \right| \leq 2 |z|^2\) for any \(|z|<1/2.\)
Moreover,
\begin{align*}
\frac{\Phi_{n}'(u)}{\Phi_{n}(u)}-\frac{\Phi'(u)}{\Phi(u)} & =\frac{\Phi(u)\Phi_{n}'(u)-\Phi'(u)\Phi_{n}(u)}{\Phi_{n}(u)\Phi(u)}\\
 & =\frac{(\Phi(u)-\Phi_{n}(u))\Phi_{n}'(u)+(\Phi_{n}'(u)-\Phi'(u))\Phi_{n}(u)}{\Phi_{n}(u)\Phi(u)}\\
 & =-\frac{\Phi_{n}'(u)}{\Phi_{n}(u)}D_{n,0}(u)+D_{n,1}(u)\\
 & =\left[\frac{\Phi'(u)}{\Phi(u)}-\frac{\Phi_{n}'(u)}{\Phi_{n}(u)}\right]D_{n,0}(u)-\frac{\Phi'(u)}{\Phi(u)}D_{n,0}(u)+D_{n,1}(u),
\end{align*}
and therefore on the event \(\A,\)\begin{eqnarray*}
\frac{\Phi_{n}'(u)}{\Phi_{n}(u)}-\frac{\Phi'(u)}{\Phi(u)}&=&\frac{D_{n,1}(u)-\frac{\Phi'(u)}{\Phi(u)}D_{n,0}(u)}{1+D_{n,0}(u)}\\
&=& D_{n,1}(u)-\frac{\Phi'(u)}{\Phi(u)}D_{n,0}(u)\\
&&\hspace{3cm}+O\left(|D_{n,0}(u)|^2\right)+
O\left(|D_{n,0}(u)D_{n,1}(u)|\right),
\end{eqnarray*}
where we use the inequality \(|((1+z)^{-1} -1|\leq 2 |z|\) for any \(|z| < 1/2.\)
Therefore,  the statistical error can be further decomposed as follows:
\begin{eqnarray*}
\int_{0}^{U_{n}}w_{\sigma}^{U_{n}}(u)\Re\bigl(\psi_{n}(u)-\psi(u)\bigr)\,du =\frac{
1
}{2
}\left(
  L_n
+R_{n}
\right)
\end{eqnarray*}
with the first order
(linear) term \(L_n=\Re
   \breve{L}_n,
\)
\begin{eqnarray*}
\breve{L}_{n} & := & 
\int_{0}^{U_{n}}w_{\sigma}^{U_{n}}(u)
\left( 
  \alpha - u \frac{\Phi'(u)}{\Phi(u)} 
 \right) D_{n,0}(u)\,du\\
& &\hspace{5cm}+\int_{0}^{U_{n}}w_{\sigma}^{U_{n}}(u)uD_{n,1}(u)\,du
\end{eqnarray*}
and the remainder $R_{n}$, which contains higher order powers of $D_{n,0}$ and $D_{n,1}.$
On the event \(\mathcal{B}_{n,A},\)
\begin{eqnarray*}
|R_n| \leq \max\Bigl(\max_u \left|D_{n,0}\right|^2, \max_u \left|D_{n,0}(u) D_{n,1}(u)\right|
\Bigr)
\int_0^{U_n} w_\sigma^{U_n} (u) du
\lesssim A^2 \frac{
\eps_n^2 
}{
   U_n^{2}
},
\end{eqnarray*}
and we finally conclude that at least for large \(n\) it holds

\begin{eqnarray}\label{Rn}
\P\left\{
|R_n|  > A^2  g_{n,1}
\right\}
\leq
\frac{
C_{1}
}{
	\sqrt{A}
}
\frac{
	\sqrt{U_{n}} n^{(1/4) - C_{1}A^{2}}
}
{
	\log^{1/4}(n)
},\end{eqnarray}
where
\[g_{n,1}=
C_3\frac{
\log(n)
}{
n
}
\frac{
e^{2 C_2 \sigma^2 U_n^2}
}{
U_n^2
}
\]
with some \(C_3>0.\)

 \textbf{4.} 
The linear term \(L_n\) can be analysed as follows. We have $\E[\breve{L}_{n}]=0,$ \; \(\Var L_n \leq \Var \breve{L}_n,\)
and 
\begin{eqnarray*}
\Var\breve{L}_{n} &=& \int_{0}^{U_{n}}\int_{0}^{U_{n}}w_{\sigma}^{U_{n}}(u)w_{\sigma}^{U_{n}}(v)\,\left( 
  \alpha - u \frac{\Phi'(u)}{\Phi(u)} 
 \right)
 \overline{
 \left( 
  \alpha - v \frac{\Phi'(v)}{\Phi(v)} 
 \right)
 }
  \frac{
 1
 }{
 \Phi(u) \overline{\Phi(v)}
 }
 \\
 && \hspace{6cm}\cdot\Cov_{\mathbb{C}}\left(\Phi_{n}(u),\Phi_{n}(v)\right)\,du\,dv\label{eq:VarLn}\\
 &  & + \int_{0}^{U_{n}}\int_{0}^{U_{n}}w_{\sigma}^{U_{n}}(u)w_{\sigma}^{U_{n}}(v)
 \left( 
  \alpha - u \frac{\Phi'(u)}{\Phi(u)} 
 \right)
  \frac{
v
 }{
 \Phi(u) \overline{\Phi(v)}
 }
 \\
  && \hspace{6cm}\cdot
 \Cov_{\mathbb{C}}
 \left(\Phi_{n}(u),\Phi'_{n}(v)\right)\,du\,dv\\
 &&
+ \int_{0}^{U_{n}}\int_{0}^{U_{n}}w_{\sigma}^{U_{n}}(u)w_{\sigma}^{U_{n}}(v)
  \frac{
u
 }{
 \Phi(u) \overline{\Phi(v)}
 }
\overline{
 \left( 
  \alpha - v \frac{\Phi'(v)}{\Phi(v)} 
 \right)
 }\\
   && \hspace{6cm}\cdot
 \Cov_{\mathbb{C}}\left(\Phi'_{n}(u),\Phi_{n}(v)\right)\,du\,dv\\
 &  & \int_{0}^{U_{n}}\int_{0}^{U_{n}}w_{\sigma}^{U_{n}}(u)w_{\sigma}^{U_{n}}(v)\,   \frac{
u v
 }{
 \Phi(u) \overline{\Phi(v)}
 } \Cov_{\mathbb{C}}\left(\Phi'_{n}(u),\Phi'_{n}(v)\right)\,du\,dv\\
 & = & I_{1}+ I_{2}+I_{3} +I_{4}.
 \end{eqnarray*}
It holds 
\[
\Cov_{\mathbb{C}}\left(\Phi_{n}(u),\Phi_{n}(v)\right)=\frac{1}{n^{2}}\sum_{j,k=1}^{n}\mathrm{Cov_{\mathbb{C}}}\left(e^{iuZ_{j\Delta}},e^{ivZ_{k\Delta}}\right).
\]
Let $t>s$ and compute
\begin{eqnarray*}
\mathrm{Cov_{\mathbb{C}}}\left[e^{iuZ_{t}},e^{ivZ_{s}}\right] & = & \mathrm{E}\left[e^{iuZ_{t}-ivZ_{s}}\right]-\mathrm{E}\left[e^{iuZ_{t}}\right]\mathrm{E}\left[e^{-ivZ_{s}}\right]\\
 & = & \Phi_{(Z_{t},Z_{s})}(u,-v)-\Phi_{Z_{t}}(u)\Phi_{Z_{s}}(-v)\\
 & = & \exp\left(\int_{\mathbb{R}}\psi\left(u\K_{\alpha}(x-t)-v\K_{\alpha}(x-s)\right)dx\right)\\
 &  & -\exp\left(\int_{\mathbb{R}}\psi\left(u\K_{\alpha}(x-t)\right)dx\right)\\
 & & \hspace{2cm}\times\exp\left(\int_{\mathbb{R}}\psi\left(-v\K_{\alpha}(x-s)\right)dx\right).
\end{eqnarray*}
Using the inequality $\left|e^{z}-e^{y}\right|\leq\left(\left|e^{z}\right|\vee\left|e^{y}\right|\right)\left|y-z\right|,$
which holds for any $z,y\in\mathbb{C},$ we get
\begin{eqnarray*}
\left|\mathrm{Cov_{\mathbb{C}}}\left[e^{iuZ_{t}},e^{ivZ_{s}}\right]\right| & \leq & \left(\left|\Phi_{(Z_{t},Z_{s})}(u,-v)\right|\vee\left|\Phi_{Z_{t}}(u)\Phi_{Z_{s}}(-v)\right|\right)\\
 &  & \times\biggl|\int_{\mathbb{R}}\Bigl[\psi\left(u\K_{\alpha}(x-t)-v\K_{\alpha}(x-s)\right) \Bigr.\biggr. 
 \\
 && 
 \hspace{2cm}-
\biggl. \Bigl.\psi\left(u\K_{\alpha}(x-t)\right)-\psi\left(-v\K_{\alpha}(x-s)\right)\Bigr]dx\biggr|.
\end{eqnarray*}
Due to the L{\'e}vy-Khintchine formula \eqref{LK1},
we derive for any $u_{1},u_{2}\in\mathbb{R},$ 
\begin{multline*}
\left|\psi(u_{1}+u_{2})-\psi(u_{1})-\psi(u_{2})\right| \\
=\left|\sigma^{2}u_{1}u_{2}\right|
+\int_{0}^{\infty}\left|\exp(iu_{1}x)-1\right|\left|\exp(iu_{2})x)-1\right|\nu(dx)
\leq C \left|u_{1}\right|\left|u_{2}\right|
\end{multline*}
with \(C=\sigma^{2}+\int_{\mathbb{R}}x^{2}\nu(dx)<\infty.\)
As a result 
\begin{multline*}
\left|\mathrm{Cov_{\mathbb{C}}}\left[e^{iuZ_{t}},e^{ivZ_{s}}\right]\right|\\
\leq C \left|uv\right|\left(\left|\Phi_{(Z_{t},Z_{s})}(u,-v)\right|\vee\left|\Phi_{Z_{t}}(u)\Phi_{Z_{s}}(-v)\right|\right)\\
\times \int \K_{\alpha}(x)\K_{\alpha}(x+t-s)\,dx.
\end{multline*}
Hence 
\begin{eqnarray*}
\left|\mathrm{Cov_{\mathbb{C}}}\left[\Phi{}_{n}(u),\Phi{}_{n}(v)\right]\right| & = & \frac{1}{n^{2}}\sum_{j,k=1}^{n}\left|\mathrm{Cov_{\mathbb{C}}}\left(e^{iuZ_{j\Delta}},e^{ivZ_{k\Delta}}\right)\right|\\
 & \leq & \frac{\left|uv\right|}{n^{2}}\sum_{j,k=1}^{n}\left(\left|\Phi_{(Z_{j \Delta},Z_{k \Delta})}(u,-v)\right|\vee\left|\Phi(u)\Phi(-v)\right|\right)\\
 &  & \hspace{2.5cm}\times\int \K_{\alpha}(x)\K_{\alpha}(x+(k-j)\Delta)\,dx,
\end{eqnarray*}
where
\[
\int \K_{\alpha}(x)\K_{\alpha}(x+h)\,dx=0
\]
if $\left|h\right|>2/\alpha$ and
\[
\int \K_{\alpha}(x)\K_{\alpha}(x+h)\,dx\leq2\left(1-\alpha\left|h\right|/2\right)^{2/\alpha}
\]
for $\left|h\right|\leq2/\alpha.$ In the limiting case $\alpha\searrow0,$
we get 
\[
\int \K_{\alpha}(x)\K_{\alpha}(x+h)\,dx\leq2e^{-|h|}.
\]
As a result 
\begin{align*}
\left|\mathrm{Cov_{\mathbb{C}}}\left[\Phi{}_{n}(u),\Phi{}_{n}(v)\right]\right| & \leq\frac{Q(u,v)}{n^{2}}\sum_{0\leq\left|j-k\right|\leq\frac{2}{\Delta\alpha}}\left(1-\alpha\Delta\left|j-k\right|/2\right)^{2/\alpha}\\
 & \leq\frac{Q(u,v)}{n}\int_{0}^{\min\left\{ n,\frac{2}{\Delta\alpha}\right\} }\left(1-\alpha\Delta h/2\right)^{2/\alpha}\,dh\\
 & =\frac{Q(u,v)}{\Delta n}\int_{0}^{\min\left\{ \Delta n,\frac{2}{\alpha}\right\} }\left(1-\alpha r/2\right)^{2/\alpha}\,dr\\
 & =\frac{Q(u,v)}{\Delta n}\frac{2}{\alpha+2}\left(1-\min\left\{ \frac{\alpha\Delta n}{2},1\right\} \right)^{2/\alpha+1},
\end{align*}
where the function $Q(u,v)$ is bounded provided $\sigma>0.$ 

For further analysis of the terms \(I_1 - I_4\) in (\ref{eq:VarLn}), we need also some asymptotic upper bound for the relation \(|\Phi'(u) / \Phi(u)|\) for large \(u.\) Combining \eqref{Phiu} with \eqref{psiu}, we get
\[
\frac{\Phi'(u)}{\Phi(u)}=2\left( 
-\alpha\int_{0}^{1}\psi(yu)y{}^{\alpha-1}\,dy+\psi(u)
\right)/u,
\]
and therefore 
it holds $\left| \Phi'(u) / \Phi(u) \right|\lesssim
u$
as $|u|\to\infty$.

So we
have for $I_{1}$ 
\begin{align*}
I_{1} & \leq\frac{Q^{*}}{n}\left[\int_{0}^{U_{n}}\left|w_{\sigma}^{U_{n}}(u)\right|
\cdot
\left|
  \alpha - u \frac{\Phi'(u)}{\Phi(u)} 
\right|
\cdot
\left|\frac{1}{\Phi(u)}\right|\,du\right]^{2}\\
 & \leq\frac{Q^{*}}{n}\left[\int_{0}^{U_{n}}\left|U_{n}^{-3}w_{\sigma}^1(u/U_{n})\right|
 \left(
  \alpha + u \left|
           \frac{\Phi'(u)}{\Phi(u)} 
   \right|
\right)
 \left|\frac{1}{\Phi(u)}\right| du
 \right]^2\\
 & \leq\frac{Q^{*}}{nU_{n}}\left[\int_{\eps}^{1}\left|w_{\sigma}^{1}(u)u\right| \cdot
 \left|
	 \frac{\Phi'(u U_n)}{\Phi(u U_n)} 
\right| \cdot
 \left|
\frac{1}{\Phi(u U_n)}
\right|
\,du\right]^{2}\\
 & \lesssim\frac{1}{n\left|\Phi(U_{n})\right|^2}
\end{align*}
with some \(Q^*>0.\)
Analogously we get the upper bounds for \(I_2, I_3, I_4,\) for instance,
\[
I_{4}\leq\frac{Q^{*}}{n}\left[\int_{0}^{U_{n}}\left|w_{\sigma}^{U_{n}}(u) u \right|\frac{1}{\left|\Phi(u)\right|}\,du\right]^{2}
\lesssim 
\frac{1}{n U_n^2 \left|\Phi(U_{n})\right|^{2}},\quad n\to\infty.
\]

Finally, taking into account that 
\begin{eqnarray*}
|\Phi(U_n)| &=& 
\exp\left\{2\int_0^1 \Re [\psi(U_n y)]y^{\alpha-1}dy\right\} \\
&=&
\exp\left\{
-\frac{
\sigma^2 U_n^2
}{
 2+\alpha
}
-
\frac{
2 \lambda
}{
\alpha
}
+
2\int_0^1 \Re \F[\nu](U_n y) y^{\alpha-1} dy
\right\}
\gtrsim
e^{-C_4 U_n^2}
\end{eqnarray*}
with any \(C_4>\sigma^2/(4+2 \alpha),\) we conclude that  due to Markov inequality,
\begin{eqnarray}\label{Ln}
\P 
\left\{
|L_n| > 
A g_{n,2}
\right\}
\leq 
\frac{
1
}{
A^2
}, \qquad \mbox{where} \quad g_{n,2}=\frac{
e ^{(C_4/2) U_n^2}
}{
n^{1/2}
}.
\end{eqnarray}
\textbf{4.} Joint consideration of \eqref{bias}, \eqref{Rn} and \eqref{Ln} concludes the proof. In fact, under the choice \(U_n=\sqrt{\kappa \log(n)}\) with \(\kappa<\min\left(C_4^{-1}, (2 C_2 \sigma^2)^{-1} \right)\) 
we get that both  \(g_{n,1}\) and \(g_{n,2}\) are of the polynomial order.

\subsection{Proof of Theorem~\ref{thm2}}

Below we  focus on the proof of the first statement. 

A scheme for the proof of lower bounds is introduced in \cite{BR} and (more generally) in \cite{Tsyb}. Shortly speaking, it is sufficient to  construct two models  from the class \(\mathcal{T}_s(\sigma^\circ, R)\), say $(\gamma_0, \sigma_0,\nu_0)$ and $(\gamma_1, \sigma_1,\nu_1)$ (depending on \(n\)), such that \[|\sigma_1^2 - \sigma_0^2| \geq 2 A \log(n)^{-(s+3)/2},\] and the \(\chi^2\)-difference between the corresponding probability measures is bounded by \(1/n:\)
\begin{eqnarray*}
\chi^2 (p_1 | p_0) 
=
\int \frac{
\left(
  p_1 (x)  - p_0 (x)
\right)^2
}{
p_0(x)
}
dx \lesssim n^{-1},
\end{eqnarray*} 
where \(p_0\) and \(p_1\) are the probability densities for the first  and the second models resp.

\textbf{1.} Let us first present the models. The first model has the triplet  $(0,\sigma_0,\nu_0)\in\mathcal{T}_s(\sigma^\circ, R)$
with $\sigma_0=\sigma=\sigma^\circ/2$ and a L{\'e}vy density
\(
   \nu_0(x)=\nu(x) = c(1+\left|x\right|)^{-4},
\)
where \(c>0\) is chosen to guarantee \(\left\Vert \nu^{(s)}\right\Vert _{\infty}\leq R.\) We now perturb $(\sigma,\nu)$ such
that for low frequencies the characteristic functions still coincide. For this reason, we take  a flat-top kernel  $K$ such that 
\begin{eqnarray*}
\mathcal{F}K(u)=\begin{cases}
1, & |u|\leq1,\\
\exp\left(-\frac{e^{-1/(|u|-1)}}{2-|u|}\right), & 1<|u|<2,\\
0, & |u|\geq2.
\end{cases}
\end{eqnarray*}
This kernel and its derivatives have polynomial decay of any order, that is, for any \(r=0,1,2,...\) and any \(q=1,2,.. \) it holds \(K^{(r)}(x) \leq (1+|x|)^{-q}\) at least for large \(|x|\). Introduce $K_{h}(x)=h^{-1}K(h^{-1}x)$ for some (bandwidth) $h>0$.

Introduce the second model via the triplet \((0, \sigma_1,  \nu_1)\), where 
\[
\sigma_{1}^{2}=\sigma^{2}+2\delta, \qquad
\nu_{1}(x)=\nu(x)+\delta K_{h}''(x)
\]
with some \(\delta>0,\) which we will specify latter. 
Note that this model also belongs to the considered class 
\(\mathcal{T}_s(\sigma^\circ, R)\) when $h$ is small enough, provided
$\delta=o(h^{3})$ since then as $h\to0$ 
\[
\delta\left|K_{h}''(x)\right|=\delta h^{-3}\left|K''(x/h)\right|\lesssim\delta h^{-3}(1+\left|x\right|/h)^{-4}=o((1+\left|x\right|)^{-4})=o(\nu(x))
\]
(uniformly over $x\in\R$) follows from the polynomial decay of $K''$
of any order. 

\textbf{2.} On the second step, we consider the difference between the models.  For the corresponding characteristic exponents we obtain
(note ${\cal F}K_{h}''(u)=-u^{2}{\cal F}K_{h} (u)$, $\int K_{h}''(u) du=0$):
\[
\psi_{1}(u)-\psi_{0}(u)=\delta u^{2}(1-{\cal F}K(hu)),
\]
which is zero for $u\in[-h^{-1},h^{-1}]$. 

For further analysis, we need a lower bound for  the marginal density $p_{0}$
of the process 
\[
Z_{0,s}=\int \K_{\alpha}(s-t)\,dL_{0,t},
\]
where $L_{0,t}$ is a Levy process with triplet $(0,\sigma_{0},\nu_{0}).$ Note that since the process \(Z_s\) is stationary, we can take any \(s,\) in particular, \(s=0.\) Taking into account the decomposition~\eqref{model}, 
we conclude that 
\begin{align*}
p_{0}(x) & =\Big(N(0,\sigma_{\K}^{2})\ast\sum_{k=1}^{\infty}\frac{e^{-\lambda\Delta}(\lambda\Delta)^{k}}{k!}q_{k}\Big)(x),
\end{align*}
where $\sigma_{\K}^{2}=\sigma_{0}^{2}\int \K_{\alpha}^{2}(s)\,ds$ and
$q_{k}$ is the density of a random variable 
\[
\sum_{j=1}^{k}\K_{\alpha}\left(-U_{j}\right)\xi_{j}
\]
with $\xi_{1},\ldots,\xi_{k}$ being i.i.d random variables with density
$\nu_{0}/\lambda$ and $U_{1}<\ldots<U_{k}$ being i.i.d. random variables
with uniform law on $[-1/\alpha, 1/\alpha]$.
In view of the positivity of the summands, $\nu_{0}\gtrsim\nu$ and
the exponential decay of the Gaussian density (uniformly for $\Delta\lesssim1$
and keeping $\lambda,\sigma_{0},\nu_{0}$ fixed), we derive 
\[
p_{0}(x)\geq\lambda\Delta e^{-\lambda\Delta}(N(0,\sigma_{\K}^{2})\ast q_{1})(x) \gtrsim
\left(
  1 + |x|
\right)^{-4}.
\]
This yields the following upper bound for the \(\chi^2-\) difference between the models:
\begin{eqnarray*}
\chi^2 (p_1 | p_0) 
=
\int \frac{
\left(
  p_1 (x)  - p_0 (x)
\right)^2
}{
p_0(x)
}
dx 
\lesssim 
\int \left(
  1+|x|^4
\right) \left(
  p_1 (x)  - p_0 (x)
\right)^2
dx,
\end{eqnarray*}
and due to the Plancherel identity, we get 
\begin{eqnarray}
\label{chi2}
\chi^2 (p_1 | p_0) 
\lesssim
\bigl\|
\Phi_1 - \Phi_0
\bigr\|_{\L^2}^2
+
\bigl\|
\left(
  \Phi_1 - \Phi_0
 \right)''
\bigr\|_{\L^2}^2
\end{eqnarray}
 With the inequality $\left|1-e^{-z}\right|\le2\left|z\right|$ for
$z=x+iy\in\mathbb{C}$ with $x\ge0$ we can estimate the \(\L^2\)-norm between the characteristic functions \(\Phi_0\) and \(\Phi_1\):
\begin{align*}
\left\Vert \Phi_{1}-\Phi_{0}\right\Vert _{L^{2}}^{2} & \le\int2\max\left(\left|\Phi_{0}(u)\right|,\left|\Phi_{1}(u)\right|\right){}^{2} \left|\Psi_{1}(u)-\Psi_{0}(u)\right|^{2}du\\
 & \lesssim\int_{\left|u\right|>h^{-1}}e^{
- (\sigma^2/(\alpha + 2))
 u^2} \left|u^{-\alpha}\int_{0}^{u}\left(\psi_{1}(z)-\psi_{0}(z)\right)z{}^{\alpha-1}\,dz\right|^{2}du\\
 & \lesssim\delta^{2}\int_{\left|u\right|>h^{-1}}
 e^{
- (\sigma^2/(\alpha + 2))
 u^2}
 u^{-2\alpha}\left|\int_{0}^{u}(-1+{\cal F}\K(hz))z{}^{\alpha+1}\,dz\right|^{2}du\\
 & \lesssim\delta^{2}h^{-2(\alpha+2)}\int_{\left|u\right|>h^{-1}}e^{
- (\sigma^2/(\alpha + 2))
 u^2}u^{-2\alpha}\\
 &\hspace{5cm}\times\left|\int_{0}^{uh}(-1+{\cal F}\K(y))y{}^{\alpha+1}\,dy\right|^{2}du\\
 & \lesssim\delta^{2}h^{-5}\int_{\left|v\right|>1}
  e^{
- (\sigma^2/(\alpha + 2))
 (v/h)^2}
  v^{-2\alpha}\\
   &\hspace{5cm}\times\left|\int_{0}^{v}(-1+{\cal F}\K(y))y{}^{\alpha+1}\,dy\right|^{2}dv\\
   & \lesssim\delta^{2}h^{-5}
   e^{
- (\sigma^2/(\alpha + 2))
 (1/h)^2}.
\end{align*}
where we use that 
\begin{eqnarray*}
|\Phi(u)| &=& 
\exp\left\{ 2
u^{-\alpha} \int_0^u \left(
  -\frac{1}{2} \sigma^2 z^2 - \lambda+ \Re \F[\nu](z) 
\right)
z^{\alpha-1} dz
\right\}  \\
&\leq& \exp\left\{-( 2 \sigma^2/(2\alpha + 4))u^2\right\},
\end{eqnarray*}
since  \(| \Re \F[\nu](z) |< \lambda.\)
Analogously, we get the upper bound for the second summand in \eqref{chi2}:
\begin{multline*}
\left\Vert \left( 
\Phi_{1}-\Phi_{0}
\right)''\right\Vert _{L^{2}}^{2} 
\\= 
\int_{\left|u\right|>h^{-1}}
\Bigl|
\left(
\Psi''_1 (u) + 
\Psi'_1 (u)
\right)
\Phi_1(u) 
+
\left(
\Psi''_0 (u) + 
\Psi'_0 (u)
\right)
\Phi_0(u) 
\Bigr|^2 du\\
 \lesssim 
\int_{\left|u\right|>h^{-1}}
\Bigl|
\left(
\Psi''_1 (u) + 
\Psi'_1 (u)
\right)
+
\left(
\Psi''_0 (u) + 
\Psi'_0 (u)
\right)
\Bigr|^2 
e^{- (2 \sigma^2/(\alpha + 2))u^2}
du.
\end{multline*}
Due to the definition of the class (\(\int x^2 \nu(dx) <\infty\)) and to the assumption \(\nu(\R)<\infty\), we get that \(|\psi'(u)| \lesssim 1 + |u|\) and \(|\psi''(u)|\lesssim 1\) as \(u \to \infty.\) Therefore, applying \eqref{Phi}, we get the same asymptotics for the first and second derivatives of the function \(\Psi_0(u),\) whereas  \(\Psi'_1 (u) \lesssim 1+|u|+ \delta h^{-1}\) and  \(\Psi''_1 (u)  \lesssim 1+ \delta h^{-2}\). Finally, we get
\begin{eqnarray*}
\left\Vert \left( 
\Phi_{1}-\Phi_{0}
\right)''\right\Vert _{L^{2}}^{2}
   &\lesssim& \delta^{2}h^{-4}
   e^{
- (2 \sigma^2/(\alpha + 2))
 (1/h)^2}.
\end{eqnarray*}
\textbf{3.} To conclude the proof, we choose \(\delta= \delta' h^{s+3}\) with fixed \(\delta'\), and \[h=\left(
   2 \sigma^2 /(\alpha +2) 
\right)^{1/2} \left(
  \log(n)
\right)^{-1/2}.\]
Then
\begin{eqnarray*}
\chi^2 (p_1 | p_0)  \lesssim h^{2s+1}  e^{
- ( 2 \sigma^2/(\alpha + 2))
 (1/h)^2} \lesssim n^{-1},
\end{eqnarray*}
and 
\begin{eqnarray*}
| \sigma_1^2 - \sigma_0^2| = 2 \delta =  
C
\left(
  \log(n)
\right)^{-(s+3)/2}
\end{eqnarray*}
with some constant \(C\) depending on \(\alpha\) and \(\sigma^\circ\). This observation completes the proof.

\section*{Acknowledgment}
The study has been funded by the Russian Academic Excellence Project ``5-100''. 

\section*{Appendix. Some auxiliary results}
\begin{lem}\label{lem1}
Let \(\psi(u)\) be a characteristic exponent of \(L_t\) in the form \eqref{LK1}, and let \(\int |x| \nu(dx) < \infty\). Then for any \(\alpha>0\), it holds
\(
\lim_{u \to 0+}
\psi(u)u^{\alpha-1} = 0.
\)
\end{lem}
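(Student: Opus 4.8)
The plan is to prove the stronger assertion that $\psi(u) = O(u)$ as $u \to 0+$, from which the lemma follows at once: since $|\psi(u) u^{\alpha-1}| = O(u^{\alpha})$ and $\alpha > 0$, the right-hand side tends to $0$. The cases $\alpha \ge 1$ are in fact immediate from the continuity of $\psi$ and $\psi(0)=0$ alone (then $u^{\alpha-1}$ is bounded near the origin), so the real content lies in the regime $0 < \alpha < 1$, where $u^{\alpha-1} \to +\infty$ and must be absorbed by the decay of $\psi$.

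First I would use the L{\'e}vy--Khintchine representation \eqref{LK1} to split $\psi$ into its three constituent pieces,
\[
\psi(u) = \i\gamma u - \tfrac{1}{2}\sigma^2 u^2 + \int_\R \bigl(e^{\i u x} - 1\bigr)\,\nu(dx),
\]
and note that $\psi(0)=0$. The drift term $\i\gamma u$ is trivially $O(u)$, and the Gaussian term satisfies $|{-}\tfrac12\sigma^2 u^2| \le \tfrac12\sigma^2|u|$ for $|u|\le 1$, so both are harmless.

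The only term requiring attention is the jump integral, and this is where the hypothesis $\int|x|\,\nu(dx) < \infty$ enters. I would invoke the elementary inequality $|e^{\i\theta}-1| \le |\theta|$, valid for all real $\theta$, to obtain the pointwise bound $|e^{\i u x}-1| \le |u||x|$; integrating against $\nu$ then gives
\[
\left|\int_\R \bigl(e^{\i u x}-1\bigr)\,\nu(dx)\right| \le |u|\int_\R |x|\,\nu(dx).
\]
The finite-first-moment assumption guarantees the right-hand side is finite and linear in $|u|$; without it one would only recover $\psi(u)=o(1)$ by dominated convergence, which is insufficient to control the singular factor $u^{\alpha-1}$.

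Combining the three estimates yields, for $|u|\le 1$ (where $u^2 \le |u|$), the bound $|\psi(u)| \le C|u|$ with $C = |\gamma| + \tfrac12\sigma^2 + \int_\R |x|\,\nu(dx)$, and hence $|\psi(u)u^{\alpha-1}| \le C u^{\alpha} \to 0$ as $u\to 0+$. There is essentially no serious obstacle: the whole argument rests on recognizing that the first-moment condition upgrades the trivial continuity estimate to the Lipschitz-type bound $\psi(u)=O(u)$, which is exactly strong enough to cancel the blow-up of $u^{\alpha-1}$.
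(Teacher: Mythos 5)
Your proposal is correct and follows essentially the same route as the paper: both decompose \(\psi\) via the L\'evy--Khintchine formula \eqref{LK1} and use the first-moment condition \(\int |x|\,\nu(dx)<\infty\) to show the jump integral is of order \(u\), which absorbs the singular factor \(u^{\alpha-1}\). The only minor difference is that the paper evaluates \(\lim_{u\to 0+}\int \frac{e^{\i u x}-1}{u}\,\nu(dx)=\i\int x\,\nu(dx)\) by dominated convergence, whereas you apply the pointwise inequality \(|e^{\i\theta}-1|\le|\theta|\) directly, which bypasses the convergence theorem and yields the slightly stronger quantitative bound \(|\psi(u)|\le C|u|\).
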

\begin{proof}
Note that
\begin{eqnarray*}
\lim_{u \to 0+} \frac{
\psi(u)
}{
u^{1-\alpha}
}
=
\lim_{u \to 0+} \frac{
\i \gamma u - (\sigma^2/2) u^2 
}{
	u^{1-\alpha}
}
+
\lim_{u \to 0+}
\left[
  u^\alpha
  \cdot
\int_{\R/\{0\}}
\frac{
  e^{\i u x}-1		
}{
  u
}\nu (dx) 
\right]=0,
\end{eqnarray*}
since 
\begin{eqnarray*}
\lim_{u \to 0+}
\int_{\R/\{0\}}
\frac{
  e^{\i u x}-1		
}{
  u
}\nu (dx)
=
\int_{\R/\{0\}}
\lim_{u \to 0+}
\frac{
  e^{\i u x}-1		
}{
  u
}\nu (dx) = \i \int_{\R/\{0\}} x \nu (dx),
\end{eqnarray*}
where the change of places between limit and integral is possible due to the Lebesque theorem. In fact, 
\begin{eqnarray*}
\left|
\frac{
  e^{\i u x}-1		
}{
  u
}
\right|
\leq 
\frac{
  1- \cos(u x)
}{
  u
}
+ 
\frac{
  |\sin(ux)|
}{
  u
}
\leq 2 |x|,
\end{eqnarray*}
and
 \(\int |x| \nu(dx) <\infty\) due to the assumption.
\end{proof}
%
%
%
\section*{References}
\bibliographystyle{plain}
\bibliography{bibliography-final-1}

\begin{thebibliography}{1}

\bibitem{Barndorff-NielsenSchmiegel2009}
{Barndorff-Nielsen, Ole E. and Schmiegel, J.}
\newblock Brownian semistationary processes and volatility/intermittency.
\newblock In {\em Advanced financial modelling}, volume~8 of {\em Radon Ser.
  Comput. Appl. Math.}, pages 1--25. Walter de Gruyter, Berlin, 2009.

\bibitem{BR}
{Belomestny, D., and Reiss, M.}
\newblock {Estimation and Calibration of L{\'e}vy Models via Fourier Methods}.
\newblock In {\em {L{\'e}vy Matters IV: Estimation for Discretely Observed
  L{\'e}vy Processes}}, pages 1--76. Springer International Publishing, Cham,
  2015.

\bibitem{BPW}
{Belomestny, D., Panov, V., and Woerner, J.}
\newblock {Low frequency estimation of continuous-time moving average L{\'e}vy
  processes}.
\newblock {arXiv: 1607.00896}, 2016.

\bibitem{Kawata}
{Kawata, T.}
\newblock {\em {Fourier analysis in probability theory}}.
\newblock {Academic Press}, 1972.

\bibitem{panovdiss}
{Panov, V.}
\newblock {\em {Abelian theorems for stochastic volatility models and
  semiparametric estimation of the signal space}}.
\newblock PhD thesis, Humboldt University, 2012.

\bibitem{rajput1989spectral}
{Rajput, B. and Rosi{\'n}ski, J.}
\newblock Spectral representations of infinitely divisible processes.
\newblock {\em Probability Theory and Related Fields}, 82(3):451--487, 1989.

\bibitem{WD}
{Schnurr, A. and Woerner, J. H. C.}
\newblock Well-balanced {L}\'evy driven {O}rnstein-{U}hlenbeck processes.
\newblock {\em Stat. Risk Model.}, 28(4):343--357, 2011.

\bibitem{Tsyb}
{Tsybakov, A.}
\newblock {\em {Introduction to nonparametric estimation}}.
\newblock {Springer, New York}, 2009.

\end{thebibliography}

\end{document}